\theoremstyle{definition}
\newtheorem{definition}{Definition}[section]
\newtheorem{theorem}{Theorem}[section]
\newtheorem{lemma}{lemma}[section]
\newtheorem*{remark}{Remark}
\newtheorem{example}{Example}[section]
\def\BibTeX{{\rm B\kern-.05em{\sc i\kern-.025em b}\kern-.08em
    T\kern-.1667em\lower.7ex\hbox{E}\kern-.125emX}}
\begin{document}

\title{Reducing Mid-Circuit Measurements via Probabilistic Circuits\\
{
}
\thanks{The research is part of the Munich Quantum Valley (MQV), which is supported by the Bavarian state government with funds from the Hightech Agenda Bayern Plus. We are grateful to our supervisor Helmut Seidl for many fruitful discussions and his support at all times.}
}

\author{\IEEEauthorblockN{1\textsuperscript{st} Yanbin Chen \orcidlink{0000-0002-1123-1432}}
\IEEEauthorblockA{\textit{TUM School of CIT} \\
\textit{Technical University of Munich}\\
Germany \\
yanbin.chen@tum.de }
\and
\IEEEauthorblockN{2\textsuperscript{nd} Innocenzo Fulginiti \orcidlink{0000-0001-8818-9626}}
\IEEEauthorblockA{\textit{TUM School of CIT} \\
\textit{Technical University of Munich}\\
Germany \\
innocenzo.fulginiti@tum.de}
\and
\IEEEauthorblockN{3\textsuperscript{rd} Christian B.~Mendl \orcidlink{0000-0002-6386-0230}}
\IEEEauthorblockA{\textit{TUM School of CIT} \\
\textit{Technical University of Munich}\\
Germany \\
christian.mendl@tum.de}
}

\newcommand{\andothers}{et~al.}

\newcommand{\probgate}[2]{\gate{#1}\gategroup[1,steps=1,style={rounded
         corners,fill=gray!20, inner
         xsep=2pt},background,label style={label
         position=below,anchor=north,yshift=-0.2cm}]{#2}}

\newcommand{\probctrl}[3]{\ctrl{#1}\gategroup[#2,steps=1,style={rounded
         corners,fill=gray!20, inner
         xsep=2pt},background,label style={label
         position=below,anchor=north,yshift=-0.2cm}]{#3}}

\maketitle

\begin{abstract}
Mid-circuit measurements and measurement-controlled gates are supported by an increasing number of quantum hardware platforms and will become more relevant as an essential building block for quantum error correction. However, mid-circuit measurements impose significant demands on the quantum hardware due to the required signal analysis and classical feedback loop. This work presents a static circuit optimization algorithm that can substitute some of these measurements with an equivalent circuit with randomized gate applications. Our method uses ideas from constant propagation to classically precompute measurement outcome probabilities. 
Our proposed optimization is efficient, as its runtime scales polynomially on the number of qubits and gates of the circuit.
\end{abstract}

\begin{IEEEkeywords}
    Mid-circuit measurement, Dynamic circuit, Quantum circuit optimization, Quantum compilation, Static analysis, Constant Propagation, Probabilistic programming
\end{IEEEkeywords}

\section{Introduction}\label{sec:intro}
Quantum dynamic circuits are quantum circuits with mid-circuit measurements.
Such circuits offer a flexible framework for realizing quantum algorithms, where the gate sequence is not fully determined at compile time since operations can depend on measurement outcomes.
Mid-circuit measurements enable, e.g.,
employing a qubit multiple times: 
By measuring a qubit, it is reset and, therefore, ready to be reused for further computation as ``fresh'' qubit \cite{decross_qubit-reuse_2022, brandhofer_optimal_2023, hua_exploiting_2023}.

Mid-circuit measurements are nowadays supported by quantum software frameworks like Qiskit, Pennylane or T$\ket{\text{ket}}$\cite{ibm_mid_circ_meas_available_2021, nation_ibm_howtomidcircmeas_2021, ibm_dynamic_circuit, Qiskit, bergholm_2022_pennylane, Sivarajah_tket_2021}. 
Nevertheless, they impose large requirements on the quantum hardware and have a relatively long duration due to the required classical feedback loop \cite{PhysRevLett.127.100501, lubinski2022advancing, ella2023quantumclassical}.

\begin{figure}
\centering
\begin{subfigure}{0.4\textwidth}
   \begin{quantikz}[row sep={7mm,between origins}]
        \qw & \gate[2]{U_1}\slice[style=blue!60]{}  & \rstick[2]{$\ket{\Phi}$} & \qw & \targ{}  & \qw      & \ctrl{1} & \meter{} \wire[d][3]{c} &  \gate[4]{U_3} & \qw \\
        \qw & \qw                    & \qw                      & \qw & \ctrl{-1}& \ctrl{1} & \targ{}  & \qw                     &  \qw           & \qw \\
        \qw & \gate[2]{U_2}          & \rstick[2]{$\ket{\Psi}$} & \qw & \qw      & \targ{}  & \qw      & \qw                     &  \qw           & \qw \\
        \qw & \qw                    & \qw                      & \qw & \qw      & \qw      & \slice{} & \gate{V}                &  \qw           & \qw
    \end{quantikz}
    \caption{A circuit containing one mid-circuit measurement}
    \label{fig:example_mid_circ_elim_before}
\end{subfigure}
\hfill
\begin{subfigure}{0.4\textwidth}
   \begin{quantikz}[row sep={7mm,between origins}]
        \qw & \gate[2]{U_1}  &  & \qw & \targ{}  & \qw      & \ctrl{1} &   \qw & \qw & \gate[4]{U_3} & \qw \\
        \qw & \qw            &                       & \qw & \ctrl{-1}& \ctrl{1} & \targ{}  &   \qw & \qw &\qw & \qw \\
        \qw & \gate[2]{U_2}  &  & \qw & \qw      & \targ{}  & \qw      &   \qw & \qw & \qw & \qw  \\
        \qw & \qw            &                      & \qw & \qw      & \qw      &   \qw & \qw & \qw & \qw      &   \qw
    \end{quantikz}
    \caption{A measurement-free circuit equivalent to \cref{fig:example_mid_circ_elim_before}}
    \label{fig:example_mid_circ_elim_after}
\end{subfigure}
\caption{Example of mid-circuit measurement elimination, where $\ket{\Phi} = \frac{1}{\sqrt{2}}(\ket{00} + \ket{11})$ and $\ket{\Psi}$ is an arbitrary two-qubit state.}
\label{fig:example_mid_circ_elim}
\end{figure}
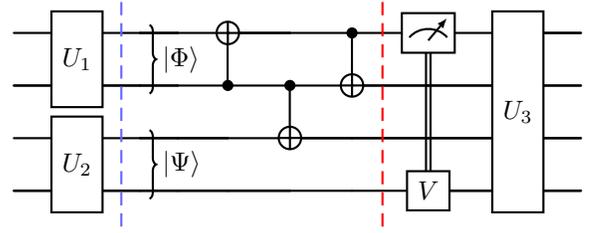
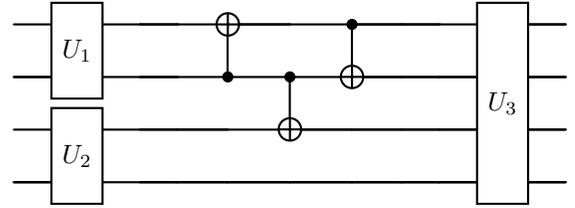
Some mid-circuit measurements, though, are redundant. 
\cref{ex:toy_mid_circ_elim} describes a toy model case.
\begin{example}
    The circuit in \cref{fig:example_mid_circ_elim_before} contains one mid-circuit measurement, with the measurement outcome controlling the gate $V$. After gates $U_1$ and $U_2$ are applied, we assume that the first two qubits are entangled in state $\ket{\Phi} = \frac{1}{\sqrt{2}}(\ket{00} + \ket{11})$ and the last two qubits are entangled in an arbitrary two-qubit state $\ket{\Psi}$.
    A short calculation shows that the state of the first qubit at the second dashed line is $\ket{0}$. Therefore, one can remove the mid-circuit measurement and its controlled gate $V$ since the measurement outcome will always be $0$, which leads to the optimized circuit in \cref{fig:example_mid_circ_elim_after}.
    \label{ex:toy_mid_circ_elim}
\end{example}

At the time of writing, state-of-the-art quantum compilers, such as Qiskit and t$\ket{\text{ket}}$ \cite{Qiskit, Sivarajah_tket_2021}, have not yet considered optimization passes for mid-circuit measurements. 
To fill this gap, we propose a solution to reduce the number of mid-circuit measurements in dynamic circuits in this paper.
Our solution requires that a mid-circuit measurement is performed on a pure state.
The measurement and the gate controlled by this measurement outcome could be replaced by a standard rotation gate followed by a probabilistic gate and a controlled gate.
The concepts of probabilistic gates and probabilistic circuits are introduced in the later sections.
Importantly, whether a probabilistic gate is applied or not is determined in each shot by a classical computer at compile time. 
Therefore, replacing dynamic components containing mid-circuit measurements with probabilistic components reduces the runtime overhead caused by mid-circuit measurements.

Our method extends Quantum Constant Propagation (QCP) to first perform a static analysis. It then applies a purity test which uses the constant information gathered from QCP to detect places where mid-circuit measurements could be replaced by probabilistic circuit snippets.

\section{Preliminaries}\label{sec:prelim}
This paper assumes that readers have a basic understanding of quantum computing; for a more in-depth explanation of quantum computing, readers could refer to the textbook 
\cite{nielsen_QC_2012}.
In the following, we give a brief introduction to some concepts that will be used in the following sections of this paper. 

\paragraph{Compile time for quantum circuits}
In the context of this paper, compile time for quantum circuits is the period during which circuits are processed on the classical computer. In particular, circuits are optimized and tailored down to the target backend.
The ideology this paper uses to reduce the overhead of executing quantum circuits is to move some expensive runtime operations back to compile time if possible.  

\paragraph{Runtime for quantum circuits}
The runtime starts when the circuit is submitted to the target backend. At runtime, circuits are executed on a quantum device. Nowadays, quantum backends are supporting more and more operations, an important one of which is the mid-circuit measurement.

\paragraph{Mid-circuit measurement}
\begin{figure}
    \centering
    \begin{quantikz}[row sep={7mm,between origins}]
          \qw & \ \ldots\ & \gate{U_i} & 
          \meter{}
    & \gate{U_{i+1}}  & \ \ldots\ & \qw
     \end{quantikz}
    \caption{Illustration of a mid-circuit measurement.}
    \label{fig:mid_circ_meas}
\end{figure}
 
With support for mid-circuit measurements, it is possible to perform measurements at any place in the circuit.
As illustrated in \cref{fig:mid_circ_meas}, a measurement appears in the middle of the circuit rather than in the end.
In the simple example of mid-circuit measurement shown in \cref{fig:mid_circ_meas_control}, the outcome of the measurement is used to control the gate $U$, where $U$ is applied only if the measurement outcome is $1$.

\begin{figure}
    \centering
    \begin{quantikz}[row sep={10mm,between origins}]
          & \meter{} \wire[d][1]{c} & \qw \\
          & \gate{U} & \qw
    \end{quantikz}
    \caption{Example of a dynamic component in a quantum circuit.}
    \label{fig:mid_circ_meas_control}
\end{figure}

\paragraph{Static and dynamic circuit}
In the context of this paper, static circuits are quantum circuits that can be fully determined at compile time.
In dynamic circuits, on the other hand, some components require results of runtime operations, such as mid-circuit measurements, to be determined.
In general, a dynamic circuit contains static parts and dynamic parts, as illustrated in \cref{fig:ex_dynamic_circuit}.
The static part of the circuit is determined at compile time; whereas the dynamic part depends on the outcome of the mid-circuit measurement.

\begin{remark}
    The static part is fully determined at compile time. However, the dynamic part can only be completely determined at runtime. Sometimes, compile time is referred to as static time. 
\end{remark}

\begin{figure}
    \centering
    \begin{quantikz}[row sep={7mm,between origins}]
        \qw & 
         \gate[2]{U_1}    
         \gategroup[4,steps=5,style={dashed,rounded corners,fill=red!20, inner xsep=2pt},background,label style={label position=below,anchor=north,yshift=-0.2cm}]{static part}
         & \qw & \targ{}  & \qw      & \ctrl{1} & \meter{} \wire[d][3]{c} 
         \gategroup[4,steps=1,style={dashed,rounded corners,fill=blue!20, inner xsep=2pt},background,label style={label position=above,anchor=north,yshift=+0.3cm}]{dynamic part}
         &  \gate[4]{U_3} 
         \gategroup[4,steps=1,style={dashed,rounded corners,fill=red!20, inner xsep=2pt},background,label style={label position=below,anchor=north,yshift=-0.2cm}]{static part}
         & \qw \\
        \qw                    & \qw                      & \qw & \ctrl{-1}& \ctrl{1} & \targ{}  & \qw                     &  \qw           & \qw \\
        \qw & \gate[2]{U_2}       & \qw & \qw      & \targ{}  & \qw      & \qw                     &  \qw           & \qw \\
        \qw & \qw                                       & \qw & \qw      & \qw      & & \gate{V}                &  \qw           & \qw
    \end{quantikz}
    \caption{The dynamic circuit in \cref{ex:toy_mid_circ_elim}.}
    \label{fig:ex_dynamic_circuit}
\end{figure}

\paragraph{Quantum Constant Propagation}
Quantum Constant Propagation (QCP) is an optimization technique that efficiently simplifies controlled gates by propagating the initial constant information throughout the circuit \cite{chen_QCP_2023}.
The following optimization step performed by QCP gives a feeling of how this technique works.
\begin{align}
    \begin{quantikz}[column sep=5pt, row sep={20pt,between origins}]
        \lstick{\ket{0}} & \gate{H} & \ctrl{1} &\slice[style=black!60]{$\frac{1}{\sqrt{2}} (\ket{010} + \ket{100})$} & \ctrl{2} & \ctrl{1} &  & \\
        \lstick{\ket{0}} & & \targ{} & \gate{X} & \ctrl{1} & \targ{} & \ctrl{1}  & \\
        \lstick{\ket{0}} & & & & \targ{} & & \gate{V} &
    \end{quantikz}   
\equiv
\begin{quantikz}[column sep=5pt, row sep={20pt,between origins}]
        \lstick{\ket{0}} & \gate{H} & \ctrl{1} & & \ctrl{1} & \\
        \lstick{\ket{0}} & & \targ{} & \gate{X} & \targ{} &   \\
        \lstick{\ket{0}} & \gate{V} & & &  &
    \end{quantikz} \,.
\label{eq:qcp_example}
\end{align}
It can be observed that the state of the control qubits at the point indicated by the dashed line in the left-hand side of \cref{eq:qcp_example} is $\frac{1}{\sqrt{2}}(\ket{01} + \ket{10})$. This means that the controls are not satisfied; for this reason, the Toffoli gate is never executed and thus is removed.
On the other hand, the control of the gate $V$ will always be satisfied. Hence, QCP simplifies the controlled gate to a non-controlled one. 
However, propagating constant information is as hard as simulating circuits.
Therefore, QCP uses a dedicated data structure and performs a restricted simulation which has a polynomial complexity.  
The restricted simulation of QCP only tries to track constant information within a preset limit.  
Suppose $c$ is the maximum number of controls each gate may have, for a circuit of $g$ gates and $n$ qubits, 
the QCP will terminate in $\mathcal{O}(g \cdot c^2 \cdot n)$ steps.

\section{Method}\label{sec:methods}
In this section, our optimization of mid-circuit measurements is fully presented.

To begin with, the concept of circuit ensemble is given as follows, which will be used in the later discussion of this section to capture uncertainty about the circuit.
\paragraph{Uncertainty of quantum system}
Similar to ensembles of pure states, this paper uses ensembles to model uncertain quantum circuits.
\begin{definition} [The ensemble of circuits]
    Assume $n$ is a positive integer. If the unitary matrix for a circuit is $C_i$ with a probability of $p_i$ for each $i \in \{1, \dots, n\}$, then this uncertain quantum circuit can be described by the ensemble $\{(p_1, C_1), \dots, (p_n, C_n)\}$.
\end{definition}

\begin{example}
    Suppose there is a situation where a coin is tossed to decide which gate to perform: An $X$ gate will be performed for a head or an $I$ gate for a tail. 
    Then, before the coin is tossed, the operation to be performed can be described by the ensemble $\{(0.5, X), (0.5, I)\}$.
\end{example}
 
Next, the concepts of probabilistic gate and probabilistic circuit are introduced, which are essential to our method. 
\paragraph{Probabilistic circuit}
Probabilistic gates are non-deterministic quantum operations, of which the non-determinism comes from their stochastic compilation procedure, as described in \cref{def:prob_gate} and \cref{def:compile_prob_gate}.
\begin{definition}[Probabilistic quantum gate]
\label{def:prob_gate}
    A probabilistic quantum gate $U(p)$ consists of a quantum gate $U$ and a probability $p$. The circuit diagram of a probabilistic quantum gate $U(p)$ is shown in \cref{fig:example_prob_gate}.
\end{definition}
For a probabilistic gate $U(p)$, the probability $p$ is supposed to be viewed as a parameter fed to the compiler. At compile time, the gate $U(p)$ corresponds to a stochastic compilation procedure which generates a gate $U$ at a probability of $p$ or an identity gate otherwise, as described by \cref{def:compile_prob_gate}.

\begin{definition}[Compilation of probabilistic gate]
\label{def:compile_prob_gate}
    A probabilistic gate $U(p)$ compiles to a gate $U$ at a probability of $p$ or an identity gate $I$ at a probability of $1-p$.
\end{definition}

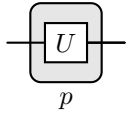
\begin{figure}
    \centering
     \begin{quantikz}
         & \probgate{U}{$p$} & \qw
     \end{quantikz}
     \caption{The circuit diagram of the probabilistic gate $U(p)$.} 
     \label{fig:example_prob_gate}
\end{figure}

A quantum circuit with at least one probabilistic gate is a probabilistic circuit.
Each probabilistic gate in the circuit is compiled independently.
\begin{definition}[Probabilistic quantum circuit]
A probabilistic circuit is a quantum circuit that contains probabilistic quantum gates.
    \label{def:probabilistic_quantum_circuit}
\end{definition}

\begin{example}
    In \cref{fig:example_prob_circ}, an example of a probabilistic quantum circuit is shown. Since $2$ probabilistic gates exist in the circuit in \cref{fig:example_prob_circ}, $4$ possible circuits as shown in \cref{fig:example_generated_circ} could be generated from compilation. 
\end{example}

Probabilistic gates introduce randomness into circuits. This randomness is brought about by the stochastic compilation process embedded in probabilistic gates.
Mid-circuit measurement provides another way of introducing randomness into quantum circuits since it enables the realization of operations relying on measurement outcomes at runtime.
\begin{remark}
    The randomness from mid-circuit measurements is introduced at runtime, while the randomness from probabilistic gates is introduced at compile time. This point is emphasized here because it implies the ideology of our optimization: Replace dynamic circuit snippet with equivalent probabilistic circuit snippet to reduce runtime overhead at only an extra static cost.
\end{remark}

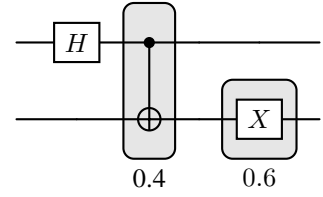
\begin{figure}
    \centering
     \begin{quantikz}
     & \gate{H} &
     \probctrl{1}{2}{0.4} & & & \\
     & & \targ{} &  & \gate{X}\gategroup[1,steps=1,style={rounded
     corners,fill=gray!20, inner
     xsep=2pt},background,label style={label
     position=below,anchor=north,yshift=-0.2cm}]{{$0.6$}} &
     \end{quantikz}
    \caption{An example of probabilistic quantum circuit $C_\text{p}$ containing a $CNOT(0.4)$, a $X(0.6)$, and a certain Hadamard gate. By compiling this probabilistic circuit, the $CNOT$ gate is generated at probability of $0.4$ and the $X$ gate is generated at probability of $0.6$.}
    \label{fig:example_prob_circ}
\end{figure}

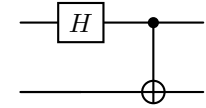
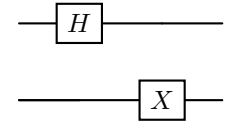
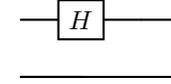
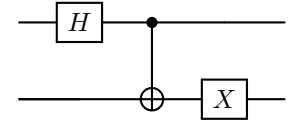
\begin{figure}
    \centering
    \begin{subfigure}{0.2\textwidth}
       \begin{quantikz}
            & \gate{H} & \ctrl{1} & \\ 
            & \qw      & \targ{} &
        \end{quantikz}
        \caption{A static circuit $C_1$, where only $CNOT$ gate appears.}
    \end{subfigure}
    \hfill
    \begin{subfigure}{0.2\textwidth}
       \begin{quantikz}
            & \gate{H} &  & \\ 
            & \qw      & \gate{X} &
        \end{quantikz}
        \caption{A static circuit $C_2$, where only $X$ gate appears.}
    \end{subfigure}
    \hfill
    \begin{subfigure}{0.2\textwidth}
       \begin{quantikz}
            & \gate{H} &  & \\ 
            & \qw      &  &
        \end{quantikz}
        \caption{A static circuit $C_3$, where neither $CNOT$ nor $X$ gate appear.}
    \end{subfigure}
    \hfill
    \begin{subfigure}{0.2\textwidth}
       \begin{quantikz}
            & \gate{H} & \ctrl{1}   && \\ 
            & \qw      & \targ{} & \gate{X} &
        \end{quantikz}
        \caption{A static circuit $C_4$, where both $CNOT$ and $X$ gate appear.}
    \end{subfigure}
        
    \caption{$4$ possible circuits generated by the probabilistic circuit in \cref{fig:example_prob_circ}.}
    \label{fig:example_generated_circ}
\end{figure}

Next, the concept of circuit ensemble is used to model the uncertainty brought about by probabilistic gates or mid-circuit measurements.

For a circuit $C$, a notation $\|C\|^{\star}$ is invented to express every possible resulting circuit of $C$.
$\|C\|^{\star}$ denotes the ensemble of all possible circuits compiled from $C$ (i.e., every probabilistic gate in $C$ is compiled) and then every mid-circuit measurement in the compiled circuit is performed. Each circuit contained in the ensemble $\|C\|^{\star}$ is paired with the probability of getting this circuit by compiling $C$ and executing every mid-circuit measurement. The definition of the notation $\|\cdot\|^{\star}$ is given in \cref{def:circuit_compiled_executed}. 

\begin{definition}
\label{def:circuit_compiled_executed}
    For a circuit $C$, the ensemble $\|C\|^{\star} \coloneqq \{ (C_i, p_i) \mid C_i$ is a static circuit achieved by compiling each probabilistic gate in $C$ and performing each mid-circuit measurement in the compiled result.$\}$
\end{definition}
\begin{remark}
    For a probabilistic circuit $C_{\text{p}}$, $\|C_{\text{p}}\|^{\star}$ is an ensemble containing all possible resulting circuits after every probabilistic gate in $C_{\text{p}}$ is compiled according to its probability; for a dynamic circuit $C_{\text{d}}$, $\|C_{\text{d}}\|^{\star}$ is an ensemble containing all possible resulting circuits after every mid-circuit measurement in $C_{\text{d}}$ is performed.
\end{remark}

\begin{example}
    For a probabilistic gate $U(p)$, 
    $\|U(p)\|^{\star} = \{(p, U), (1 - p, I) \}$, where $I$ is the identity gate.
\end{example}

The following \cref{def:multiply_ensemble} and \cref{lm:multiply_ensemble} make it possible to compose ensembles of gates together into an ensemble of circuits.
\begin{definition}
\label{def:multiply_ensemble}
     Suppose $\cdot$ is the sequential composition and $\otimes$ is the parallel composition, and $\|C_1\|^{\star}$ and $\|C_2\|^{\star}$ are two ensembles of quantum circuits. Then $\|C_1\|^{\star} \cdot \|C_2\|^{\star} = \{(p_ip_j, G^{\prime}_i\cdot G^{\prime\prime}_j) \mid \text{for each } (p_i, G^{\prime}_i) \in \|C_1\|^{\star} \text{ and } (p_j, G^{\prime\prime}_j) \in \|C_2\|^{\star} \}$, and 
     $\|C_1\|^{\star} \otimes \|C_2\|^{\star} = \{(p_ip_j, G^{\prime}_i\otimes G^{\prime\prime}_j) \mid \text{for each } (p_i, G^{\prime}_i) \in \|C_1\|^{\star} \text{ and } (p_j, G^{\prime\prime}_j) \in \|C_2\|^{\star} \}$.
\end{definition}

\begin{lemma}
\label{lm:multiply_ensemble}
    For two circuits $C_{1}$ and $C_{2}$, it holds that $\|C_{1} \cdot C_{2}\|^{\star} = \|C_{1}\|^{\star} \cdot \|C_{2}\|^{\star}$ and $\|C_{1} \otimes C_{2}\|^{\star} = \|C_{1}\|^{\star} \otimes \|C_{2}\|^{\star}$.
\end{lemma}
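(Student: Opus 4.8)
The plan is to prove both identities by exhibiting, for each, a weight-preserving bijection between the static circuits enumerated on the left-hand side and those enumerated on the right-hand side, and then reading off the equality of ensembles. I would begin by isolating the notion that drives everything: every element of $\|C\|^{\star}$ arises from a \emph{resolution} of $C$, namely a choice, for each probabilistic gate, of whether it compiles to $U$ or to $I$ (\cref{def:compile_prob_gate}), together with a choice of outcome for each mid-circuit measurement. Such a resolution determines both the resulting static circuit $C_i$ and the weight $p_i$ attached to it by \cref{def:circuit_compiled_executed}.

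The first step is the bijection itself. The probabilistic gates and mid-circuit measurements of $C_1 \cdot C_2$ are exactly the disjoint union of those of $C_1$ and those of $C_2$, and likewise for $C_1 \otimes C_2$; hence a resolution of the composite circuit is the same datum as a pair consisting of a resolution of $C_1$ and a resolution of $C_2$. Under this identification the resulting static circuit is $G'_i \cdot G''_j$ in the sequential case---once every random component is fixed, each block reduces to an ordinary gate sequence and the composite is the sequential composition of the two resolved sequences---and $G'_i \otimes G''_j$ in the parallel case, where the two blocks act on disjoint registers and the resolved operator splits accordingly. This already reproduces exactly the circuit components that appear on the right-hand sides of \cref{def:multiply_ensemble}.

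It then remains to verify that the weights transform multiplicatively, $p_{ij} = p_i p_j$, and this is the step I expect to be the crux. For probabilistic gates the factorization is immediate, since each gate is compiled by an independent draw (\cref{def:compile_prob_gate}), so the compilation weight of the composite resolution is the product of the compilation weights of the two sub-resolutions. The delicate point is the measurement weights, because an outcome probability is in general state-dependent. In the parallel case the situation stays clean: the measurements of $C_1$ and $C_2$ act on disjoint registers, so at the operator level the jump operators factorize and the joint outcome distribution is a product. In the sequential case I would invoke the chain rule, reading $p_j$ as the probability of the $C_2$-outcomes conditioned on the state entering $C_2$, which on the branch fixed by the $C_1$-resolution is precisely the output of $G'_i$; then $p_i p_j$ is exactly the joint probability of the composite branch. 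The purity hypothesis that underlies the whole method guarantees that the state entering each measurement is a well-defined pure state, so these conditional probabilities are meaningful and the chain rule applies. Combining the bijection with the weight identity $p_{ij}=p_i p_j$ yields both claimed ensemble equalities.
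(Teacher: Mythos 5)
There is nothing on the paper's side to compare against: \cref{lm:multiply_ensemble} is stated and used but never proved, so your proposal has to stand on its own. The half of it that deals with probabilistic gates does: the probabilistic gates of $C_1 \cdot C_2$ (resp.\ $C_1 \otimes C_2$) are the disjoint union of those of $C_1$ and $C_2$, each compiles by an independent draw whose parameter is a fixed number attached to the gate (\cref{def:compile_prob_gate}) and is independent of any quantum state, so your resolution bijection plus multiplicativity of independent weights is exactly the right argument. This covers every situation in which the paper actually invokes the lemma, e.g.\ the measurement-free probabilistic circuit $C_{\text{p}}$ of \cref{fig:example_prob_circ}.

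The measurement case, however, contains a genuine gap, located precisely at the point you yourself called the crux. In \cref{def:multiply_ensemble}, each element $(p_j, G''_j) \in \|C_2\|^{\star}$ carries a single fixed weight $p_j$, which is multiplied against \emph{every} branch $(p_i, G'_i)$ of $\|C_1\|^{\star}$. Your chain-rule step instead needs the conditional probability of $C_2$'s measurement outcomes given the state produced by $G'_i$, and that quantity genuinely depends on $i$; purity does not remove the dependence, because on different branches of $C_1$ the (pure) state handed to $C_2$ is different, hence so are its outcome distributions. Concretely, let $C_1$ be ``$H$ on $\ket{0}$, then measure'' and let $C_2$ be a measurement of the same qubit. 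Then $\|C_1 \cdot C_2\|^{\star}$ consists of two perfectly correlated branches of weight $\tfrac{1}{2}$ each, while for any admissible weights $q_0 + q_1 = 1$ in $\|C_2\|^{\star}$ the product ensemble $\|C_1\|^{\star} \cdot \|C_2\|^{\star}$ has four elements of weights $q_0/2,\, q_1/2,\, q_0/2,\, q_1/2$; no choice of $q_0, q_1$ reproduces the left-hand side. (Indeed $\|C_2\|^{\star}$ is not even well defined in isolation, since \cref{def:circuit_compiled_executed} supplies no input state from which its measurement probabilities could be computed.) Your parallel-composition argument has the same defect: factorization of the measurement operators across disjoint registers does \emph{not} imply that the joint outcome distribution is a product when the input state is entangled across those registers --- measuring the two halves of a Bell pair gives perfectly correlated, not independent, outcomes. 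So what your proof actually establishes is the lemma for circuits whose measurements receive branch-independent, unentangled inputs (in particular for measurement-free probabilistic circuits); as literally stated, for arbitrary circuits, the identity is false, and this should be flagged as a defect of the lemma rather than argued around with hypotheses (purity, conditioning) that neither the statement nor \cref{def:multiply_ensemble} provides.
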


\begin{example}
     In the example shown in \cref{fig:example_prob_circ} and \cref{fig:example_generated_circ}, it holds by \cref{def:multiply_ensemble} and \cref{lm:multiply_ensemble} that
     $\|C_{\text{p}}\|^{\star} = \{(0.16, C_1), (0.36, C_2), (0.24, C_3), (0.24, C_4)\}$.
\end{example}

If two circuits give the same circuit ensemble, then they are equivalent after being compiled and fully executed.
This equivalence between circuits based on circuit ensemble establishes the basis for our optimization and it is formally defined in \cref{def:runtime_eq}.

\begin{definition}[Runtime-equivalence]
\label{def:runtime_eq}
    Two circuits $C_1$ and $C_2$ are runtime-equivalent, which is denoted by $C_1 \triangleq C_2$, if $\|C_1\|^{\star} = \|C_2\|^{\star}$.
    \label{def:runtime_equivalence}
\end{definition}

Then, the discussion moves to the optimization of mid-circuit measurements.
In \cref{theorem:measurement_on_pure_state}, one possible optimization is suggested where a mid-circuit measurement and its controlling gate are replaced by a rotation gate, a probabilistic $X$ gate, and a normal controlled gate. In \cref{theorem:measurement_on_pure_state_without_ctrl}, a special case of the optimization from \cref{theorem:measurement_on_pure_state}, when the mid-circuit measurement is controlling no gate, is provided. In this case, simply the mid-circuit measurement is replaced by a rotation gate and a probabilistic $X$ gate. 
\begin{theorem}
    Given a quantum state $\ket{\psi} \coloneqq \alpha\ket{0} + \beta\ket{1}$, a rotation gate $R_{\psi\rightarrow{}1}$ such that $R_{\psi\rightarrow{}1}\ket{\psi} = \ket{1}$, and a probability $p = \|\alpha\|^2$, then
        \begin{equation}
            \begin{quantikz}
            \lstick{$\ket{\psi}$}  && \meter{} \wire[d][1]{c} & \qw\\
            & \qwbundle{n}          & \gate{V}                & \qw
            \end{quantikz}
            \triangleq
            \begin{quantikz}
            \lstick{$\ket{\psi}$}  &\gate{R_{\psi\rightarrow 1}}&\gate{X}\gategroup[1,steps=1,style={rounded
             corners,fill=gray!20, inner
             xsep=2pt},background,label style={label
             position=below,anchor=north,yshift=-0.2cm}]{$p$}& \ctrl{1} & \qw\\
            & \qwbundle{n}          && \gate{V}                & \qw
            \end{quantikz}\,.
            \label{equation:theorem_measurement_on_pure_state}
        \end{equation}\
    \label{theorem:measurement_on_pure_state}
\end{theorem}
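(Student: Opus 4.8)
The plan is to establish the runtime-equivalence directly from \cref{def:runtime_eq} by computing the two circuit ensembles $\|C_1\|^{\star}$ and $\|C_2\|^{\star}$, where $C_1$ and $C_2$ denote the left- and right-hand circuits of \cref{equation:theorem_measurement_on_pure_state}, and verifying that they coincide as sets of (circuit, probability) pairs. Throughout I would keep the first (measured) qubit separate from the $n$-qubit register, whose arbitrary state I call $\ket{\phi}$.

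First I would compute the left-hand ensemble. Performing the mid-circuit measurement on $\ket{\psi} = \alpha\ket{0} + \beta\ket{1}$ splits the computation into two branches: with probability $\|\alpha\|^2 = p$ the outcome is $0$, so the first qubit collapses to $\ket{0}$, the control is not satisfied, and $V$ is not applied; with probability $\|\beta\|^2 = 1-p$ the outcome is $1$, the first qubit collapses to $\ket{1}$, and $V$ acts on the register. This gives $\|C_1\|^{\star} = \{(p, C_0), (1-p, C_1)\}$, where $C_0$ realizes $\ket{\psi}\otimes\ket{\phi} \mapsto \ket{0}\otimes\ket{\phi}$ and $C_1$ realizes $\ket{\psi}\otimes\ket{\phi}\mapsto \ket{1}\otimes V\ket{\phi}$.

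Next I would compute the right-hand ensemble by decomposing $C_2$ into the sequential composition of the deterministic rotation $R_{\psi\rightarrow 1}$, the probabilistic gate $X(p)$, and the deterministic controlled-$V$, and then applying \cref{lm:multiply_ensemble} together with $\|X(p)\|^{\star} = \{(p, X), (1-p, I)\}$. As the rotation and the controlled gate each contribute a singleton ensemble, this yields $\|C_2\|^{\star} = \{(p, D_X), (1-p, D_I)\}$, where $D_X$ applies $R_{\psi\rightarrow 1}$, then $X$, then the controlled-$V$, and $D_I$ applies $R_{\psi\rightarrow 1}$, then the controlled-$V$. Evaluating on the input $\ket{\psi}$ and using $R_{\psi\rightarrow 1}\ket{\psi} = \ket{1}$, the probability-$p$ branch produces $X\ket{1} = \ket{0}$ on the first qubit, so its control is not satisfied and $D_X$ realizes $\ket{\psi}\otimes\ket{\phi}\mapsto \ket{0}\otimes\ket{\phi} = C_0$; the probability-$(1-p)$ branch leaves the first qubit at $\ket{1}$, so $V$ fires and $D_I$ realizes $\ket{\psi}\otimes\ket{\phi}\mapsto \ket{1}\otimes V\ket{\phi} = C_1$. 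Matching the pairs would then give $\|C_1\|^{\star} = \|C_2\|^{\star}$, hence $C_1 \triangleq C_2$.

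The hard part will be the bookkeeping of the role reversal induced by $R_{\psi\rightarrow 1}$: the measurement-outcome-$0$ branch (probability $p = \|\alpha\|^2$, in which $V$ is \emph{not} applied) must be matched with the branch in which the probabilistic $X$ gate \emph{is} applied, precisely because $R_{\psi\rightarrow 1}$ rotates $\ket{\psi}$ to $\ket{1}$ and the following $X$ rotates it back to $\ket{0}$. A secondary delicate point is justifying that the post-measurement collapsed states on the left may be equated with the unitarily produced states on the right; this is exactly what the state-and-probability reading of $\|\cdot\|^{\star}$ in \cref{def:circuit_compiled_executed} permits, since the matched branches agree on both the resulting state of the first qubit and the action ($I$ or $V$) on the register, with identical probabilities.
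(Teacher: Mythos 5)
Your proposal is correct and follows essentially the same route as the paper's own proof: compute both circuit ensembles, then match the measurement-outcome-$0$ branch (probability $\|\alpha\|^2$) with the branch where the probabilistic $X$ fires, using $X R_{\psi\rightarrow 1}\ket{\psi} = X\ket{1} = \ket{0}$ to show the control never triggers, and similarly for the other branch. The only cosmetic difference is that you invoke the composition lemma explicitly to assemble the right-hand ensemble, which the paper does implicitly.
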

\begin{proof}
    The mid-circuit measurement in the dynamic circuit on the left-hand side of \cref{equation:theorem_measurement_on_pure_state} measures to $0$ with probability $\|\alpha\|^2$, leading to the circuit
    \begin{equation}
        C_1 \coloneqq 
        \begin{quantikz}
            \lstick{$\ket{0}$}  &   & \qw\\
            & \qwbundle{n}                 & \qw
        \end{quantikz}
    \end{equation}
    and measures to $1$ with a probability of $\|\beta\|^2$, leading to the circuit
    \begin{equation}
        C_2 \coloneqq
        \begin{quantikz}
            \lstick{$\ket{1}$}  &&&\\
            & \qwbundle{n}          & \gate{V}   & 
        \end{quantikz} \,.
    \end{equation}
    Therefore, the circuit on the left-hand side of \cref{equation:theorem_measurement_on_pure_state} can be represented by the ensemble $\{(\|\alpha\|^2, C_1), (\|\beta\|^2, C_2)\}$.
    \\
    In the probabilistic circuit on the right-hand side of \cref{equation:theorem_measurement_on_pure_state},
    at probability $p = \|\alpha\|^2$ the $X$ gate appears, and the circuit becomes
    \begin{equation}
        C_1^{\prime} \coloneqq
        \begin{quantikz}
            \lstick{$\ket{\psi}$}  &\gate{R_{\psi\rightarrow{}1}}&\gate{X}& \ctrl{1} &\\
            & \qwbundle{n}          && \gate{V}               & 
        \end{quantikz} \,,
    \end{equation}
    while at probability $1 - p =  1 - \|\alpha\|^2 = \|\beta\|^2$,
    the $X$ gate does not appear, and the circuit becomes
    \begin{equation}
        C_2^{\prime} \coloneqq
        \begin{quantikz}
            \lstick{$\ket{\psi}$}  &\gate{R_{\psi\rightarrow{}1}}& \ctrl{1} &\\
            & \qwbundle{n}          & \gate{V}               & 
        \end{quantikz} \,.
    \end{equation}
    Then the circuit on the right-hand side of \cref{equation:theorem_measurement_on_pure_state} can be represented by the ensemble $\{(\|\alpha\|^2, C_1^{\prime}), (\|\beta\|^2, C_2^{\prime})\}$.
    \\
    By definition of the rotation gate $R_{\psi\rightarrow 1}$, the circuit $C_1^{\prime}$ is equivalent to
    \begin{equation}
        \begin{quantikz}
            \lstick{$\ket{1}$}  &\gate{X}& \ctrl{1} &\\
            & \qwbundle{n}      & \gate{V}               & 
        \end{quantikz}
    \end{equation}
    which is, by definition of $X$ gate, equivalent to
    \begin{equation}
        \begin{quantikz}
            \lstick{$\ket{0}$}  && \ctrl{1} &\\
            & \qwbundle{n}      & \gate{V}               & 
        \end{quantikz}
    \end{equation}
    and since the control signal is never activated, the circuit is equivalent to
    \begin{equation}
        \begin{quantikz}
            \lstick{$\ket{0}$}   &&\\
            & \qwbundle{n}       & 
        \end{quantikz} \,.
    \end{equation}
    Therefore $C_1 = C_1^{\prime}$. 
    Using similar reasoning we get $C_2 = C_2^{\prime}$.
    Hence we get $\{(\|\alpha\|^2, C_1), (\|\beta\|^2, C_2)\} = \{(\|\alpha\|^2, C_1^{\prime}), (\|\beta\|^2, C_2^{\prime})\}$, and by \cref{def:runtime_equivalence} the two circuits in \cref{theorem:measurement_on_pure_state} are runtime-equivalent.
    
\end{proof}

The optimization step suggested by \cref{theorem:measurement_on_pure_state} is: When the input state to a mid-circuit measurement is statically determined and if it is a pure state, then the dynamic circuit snippet on the left-hand side of \cref{equation:theorem_measurement_on_pure_state} is replaced with the probabilistic circuit snippet on the right-hand side.
\begin{remark}
    The ideology behind the above optimization is to reduce runtime overhead at a static cost. Specifically, the static cost includes computing the $\ket{\psi}$ information at compile time and the extra effort needed to compile the probabilistic $X$ gate; the right-hand side of \cref{equation:theorem_measurement_on_pure_state} is lower in runtime overhead because it is free of mid-circuit measurement.
\end{remark}

A special case of this optimization step is suggested in the following \cref{theorem:measurement_on_pure_state_without_ctrl}, where the mid-circuit measurement does not control any gate. Similarly, to replace the dynamic circuit snippet on the left-hand side of \cref{equation:theorem_measurement_on_pure_state_without_ctrl}, the input state to the mid-circuit measurement needs to be statically determined and in pure state.

\begin{theorem}
    Given a quantum state $\ket{\psi} \coloneqq \alpha\ket{0} + \beta\ket{1}$, a rotation gate $R_{\psi\rightarrow{}1}$ such that $R_{\psi\rightarrow{}1}\ket{\psi} = \ket{1}$, and a probability $p = \|\alpha\|^2$, then
    \begin{equation}
        \begin{quantikz}
        \lstick{$\ket{\psi}$}  & \meter{} & \qw
        \end{quantikz}
        \triangleq
        \begin{quantikz}
        \lstick{$\ket{\psi}$}  &\gate{R_{\psi\rightarrow{}1}}&&\probgate{X}{$p$} &\qw
        \end{quantikz}\,.
        \label{equation:theorem_measurement_on_pure_state_without_ctrl}
    \end{equation}
\label{theorem:measurement_on_pure_state_without_ctrl}
\end{theorem}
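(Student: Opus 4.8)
The plan is to mirror the proof of \cref{theorem:measurement_on_pure_state}: writing $L$ and $R$ for the circuits on the left- and right-hand sides of \cref{equation:theorem_measurement_on_pure_state_without_ctrl}, I would compute the ensembles $\|L\|^{\star}$ and $\|R\|^{\star}$ and verify that they coincide, so that runtime-equivalence follows immediately from \cref{def:runtime_eq}. The simplification relative to the previous proof is that no controlled gate $V$ and no ancillary register are present, so the control analysis drops out entirely.

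First I would treat $L$. Measuring $\ket{\psi} = \alpha\ket{0} + \beta\ket{1}$ yields outcome $0$ with probability $\|\alpha\|^2$, collapsing the qubit to $\ket{0}$, and outcome $1$ with probability $\|\beta\|^2$, collapsing it to $\ket{1}$. Since nothing is controlled, these post-measurement states are the only resulting static circuits, so $\|L\|^{\star} = \{(\|\alpha\|^2, C_0), (\|\beta\|^2, C_1)\}$, where $C_0$ prepares $\ket{0}$ and $C_1$ prepares $\ket{1}$.

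Next I would treat $R$. By the defining property of the rotation gate, $R_{\psi\rightarrow 1}\ket{\psi} = \ket{1}$. By \cref{def:compile_prob_gate}, the probabilistic gate $X(p)$ compiles to $X$ with probability $p = \|\alpha\|^2$ and to the identity with probability $1 - p = \|\beta\|^2$. In the former case the state becomes $X\ket{1} = \ket{0}$, giving a circuit $C_0'$ preparing $\ket{0}$; in the latter it remains $\ket{1}$, giving a circuit $C_1'$ preparing $\ket{1}$. Hence $\|R\|^{\star} = \{(\|\alpha\|^2, C_0'), (\|\beta\|^2, C_1')\}$.

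Finally I would identify $C_0 = C_0'$ and $C_1 = C_1'$ as static circuits, exactly as in the previous proof: each pair prepares the same single-qubit state, the identifications being justified by $R_{\psi\rightarrow 1}\ket{\psi} = \ket{1}$ and $X\ket{1} = \ket{0}$. The two ensembles then agree, and \cref{def:runtime_eq} yields the claimed runtime-equivalence. I do not anticipate a genuine obstacle, since this is the control-free specialization of \cref{theorem:measurement_on_pure_state}; the only point requiring care is to make explicit that ensemble equality is an equality of (probability, static-circuit) pairs, so that matching the entries genuinely rests on the circuit-level equivalences above and not merely on matching probabilities.
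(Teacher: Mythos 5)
Your proof is correct and is exactly what the paper intends: the paper's own proof of \cref{theorem:measurement_on_pure_state_without_ctrl} simply states that it is ``very similar to that for \cref{theorem:measurement_on_pure_state},'' and your argument is precisely that proof specialized to the control-free case (compute both ensembles, identify the compiled circuits via $R_{\psi\rightarrow 1}\ket{\psi}=\ket{1}$ and $X\ket{1}=\ket{0}$, and conclude by \cref{def:runtime_eq}). No issues.
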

\begin{proof}
    The proof is very similar to that for \cref{theorem:measurement_on_pure_state}.
\end{proof}


\paragraph{Purity test}
The optimizations on mid-circuit measurements suggested by \cref{theorem:measurement_on_pure_state} and \cref{theorem:measurement_on_pure_state_without_ctrl} both require that the input state to the mid-circuit measurement is pure.
Therefore, before applying the optimization on mid-circuit measurements, information of purity on qubits is needed.
The purity test presented in \cref{theorem:purity_test} will be performed for each of the mid-circuit measurements to decide whether it can be replaced by static components, i.e. probabilistic gates and normal quantum gates.

\begin{theorem}[Purity test]
\label{theorem:purity_test}
For a $n$-qubit quantum state $\ket{\Psi}$, the procedure in \cref{alg:purity_test} returns $true$ if its $i$-th qubit is not entangled with other qubits.

\begin{algorithm}
\label{alg:purity_test}
\caption{Purity test}
    \KwData {$i \in \{1, \dots, n\}$, $\ket{\Psi}$,  \\ 
    where $\ket{\Psi} = \alpha_{1}\ket{\psi_1} + \alpha_{2}\ket{\psi_2} + \dots + \alpha_{k}\ket{\psi_k}$,
    $1 \le k \le 2^n$; $\forall j \in \{1,\dots, k\}, \alpha_j \ne 0 $, and $\ket{\psi_1}, \dots, \ket{\psi_k}$ are $k$ different computational basis states}
    \KwResult {$b \in \mathbb{B}$}
    
    $A_0 \gets \{\alpha_j \mid j \in \{1, \dots, k\} \text{ and } \ket{\psi_j}_i = 0\}$; \hyperref[footnote_1]{\textsuperscript{1}}
    
    $A_1 \gets \{\alpha_j \mid j \in \{1, \dots, k\} \text{ and } \ket{\psi_j}_i = 1\}$; \hyperref[footnote_1]{\textsuperscript{1}}
    
    \If{$|A_0| = 0$ or $|A_1| = 0$}
        {\Return $true$;}
    
    \If{$|A_0| \ne |A_1|$}
        {\Return $false$;}
    
    $ratio \gets 0$; 
    
    \For {each $j$ such that $\alpha_j \in A_0$}{
        \If{exists $j'$ such that $\alpha_{j^{\prime}} \in A_1$
        and $\ket{\psi_j}_{\ne i} = \ket{\psi_{j^{\prime}}}_{\ne i}$\hyperref[footnote_2]{\textsuperscript{2}}}
        {
            \If{$ratio \ne 0$ and $ratio \ne \alpha_j / \alpha_{j^{\prime}}$}
            {\Return $false$;}
            {
                $ratio \gets \alpha_j / \alpha_{j^{\prime}}$;
                
                $A_1 \gets A_1 \setminus \{\alpha_{j^{\prime}}\}$;
            }

        }{
            \Return $false$;    
        }
        
    }
    \Return $true$;
\end{algorithm}
\footnotetext[1]{\label{footnote_1}For a computational basis state $\ket{\phi}$, $\ket{\phi}_p$ denotes the binary digit corresponding to the $p$-th qubit. E.g., $\ket{0010}_3 = 1, \ket{101}_2  = 0$.}
\footnotetext[2]{\label{footnote_2}For a computational basis state $\ket{\phi}$, $\ket{\phi}_{\ne p}$ denotes the binary string representation of $\ket{\phi}$ after removing the digit corresponding to the $p$-th qubit. E.g., $\ket{010101}_{\ne 3} = 01101$, $\ket{010101}_{\ne 1} = 10101$.}
\end{theorem}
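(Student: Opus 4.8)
The plan is to prove the stated implication: if qubit $i$ of $\ket{\Psi}$ is unentangled from the rest, then \cref{alg:purity_test} returns $true$. First I would translate the hypothesis into an algebraic normal form. Qubit $i$ being unentangled means $\ket{\Psi} = \ket{\phi}_i \otimes \ket{\chi}$ for a single-qubit state $\ket{\phi} = a\ket{0} + b\ket{1}$ on wire $i$ and an $(n-1)$-qubit state $\ket{\chi} = \sum_r \gamma_r \ket{r}$ on the remaining wires, where the sum ranges over the $(n-1)$-bit strings $r$ with $\gamma_r \neq 0$. Expanding the tensor product into the computational basis gives $\ket{\Psi} = \sum_r a\gamma_r \ket{0}_i\ket{r} + \sum_r b\gamma_r \ket{1}_i \ket{r}$, and after deleting vanishing terms this must coincide, up to reordering, with the given expansion $\sum_j \alpha_j \ket{\psi_j}$, whose basis states are pairwise distinct and whose amplitudes are nonzero.

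Then I would split on the single-qubit amplitudes. If $a = 0$ or $b = 0$, every surviving basis state has the same value on qubit $i$, so one of $A_0, A_1$ is empty and the algorithm returns $true$ at the first conditional. The substantive case is $a \neq 0$ and $b \neq 0$: here each $r$ with $\gamma_r \neq 0$ contributes exactly one term $a\gamma_r \ket{0}_i\ket{r}$ to $A_0$ and exactly one term $b\gamma_r\ket{1}_i\ket{r}$ to $A_1$, so $|A_0| = |A_1|$ and the second conditional is passed. The structural facts I would extract are: (i) because the $\ket{\psi_j}$ are pairwise distinct, within $A_0$ (resp.\ $A_1$) no two terms share the same restriction $\ket{\cdot}_{\neq i}$, so each term of $A_0$ has at most one candidate partner in $A_1$; (ii) the rest-part $r$ furnishes a genuine bijection between $A_0$ and $A_1$; and (iii) every matched pair has amplitude ratio $a\gamma_r / b\gamma_r = a/b$, independent of $r$.

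Finally I would run the loop against these facts, maintaining the invariant that whenever control reaches the iteration for a given $j$, the partner determined by the rest-part bijection still lies in $A_1$. By (i)--(ii) the inner existence test always succeeds with the unique correct $j'$; by (iii) the freshly computed ratio equals the stored ratio (or initializes it on the first pass), so the inconsistency branch is never taken; and the deletion $A_1 \gets A_1 \setminus \{\alpha_{j'}\}$ consumes exactly one partner per iteration, so the bijection keeps the invariant alive until the loop exhausts $A_0$ and the algorithm returns $true$. The main obstacle I anticipate is the bookkeeping around the fact that the algorithm manipulates $A_0, A_1$ as sets of amplitude \emph{values}, whereas the matching condition is phrased on basis states: if two distinct terms happen to carry equal amplitudes, deletion by value is ambiguous. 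I would resolve this by reading $A_0, A_1$ as collections tagged with their basis states, so that the distinctness in (i) makes each partner unique, and by observing that even under value-based deletion the constant-ratio property (iii) keeps the surviving multiset of ratios invariant, leaving the returned answer unaffected.
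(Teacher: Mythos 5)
You cannot be matched against the paper's own proof here, because the paper does not give one: \cref{theorem:purity_test} is stated with \cref{alg:purity_test} embedded in it and is followed only by the $\mathcal{O}(k^2)$ complexity remark. Judged on its own, your proof is correct, and it proves exactly what the theorem claims --- the implication ``qubit $i$ unentangled $\Rightarrow$ the algorithm returns \emph{true}'', not its converse. The skeleton is the right one: factor $\ket{\Psi}=(a\ket{0}+b\ket{1})\otimes\ket{\chi}$ with $\ket{\chi}=\sum_r\gamma_r\ket{r}$ and all $\gamma_r\neq 0$; the case $a=0$ or $b=0$ is absorbed by the first early return; otherwise your facts (i)--(iii) --- distinct rest-parts within each of $A_0$ and $A_1$, the rest-part bijection between them, and the constant pair ratio $a/b$ --- give $|A_0|=|A_1|$, make the existence test succeed with a unique partner in every iteration, and prevent the inconsistency branch from ever firing, so the loop runs to completion and returns \emph{true}. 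One point worth making explicit in a final write-up: the algorithm uses $ratio\neq 0$ as an ``already initialized'' sentinel, and this is sound only because genuine ratios $\alpha_j/\alpha_{j'}$ are never zero, which follows from the hypothesis that every $\alpha_j\neq 0$ (in your notation, $a/b\neq 0$).

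Your closing observation is a genuine catch, and arguably the most valuable part of the attempt. If $A_0$ and $A_1$ are read literally as \emph{sets of amplitude values}, duplicate amplitudes collapse: on the product state $\tfrac{1}{2}(\ket{00}+\ket{01}+\ket{10}+\ket{11})$ with $i=1$ one gets $A_0=A_1=\{1/2\}$, the first loop iteration deletes the only element of $A_1$, the second iteration's existence test then fails, and the algorithm returns \emph{false} --- so under that literal reading the theorem is false. Your repair (treating $A_0$, $A_1$ as collections of terms tagged by their basis states, equivalently as multisets of values) is therefore not optional bookkeeping but a necessary reinterpretation of the algorithm, and it deserves to be stated as an explicit convention at the start of the proof. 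The only soft spot is your justification that value-based deletion remains harmless: the clean statement is that the multiset of values surviving in $A_1$ does not depend on which equal-valued copy gets deleted, while the rest-part condition pins down the partner $j'$ (hence the computed ratio $a/b$) uniquely; your phrase about the ``multiset of ratios'' gestures at this but should be tightened. Neither issue is a gap in the mathematics.
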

For an input state containing $k$ computational basis states, the above procedure has a computational complexity of $\mathcal{O}(k^2)$.

\paragraph{Put everything together}
Finally, we are now ready to discuss the overall framework of our optimization method, which is presented in the form of pseudo-code in \cref{alg:overall_framework}.
First, QCP is applied to propagate initial constant information throughout the circuit.
Next, for each of the mid-circuit measurements in the circuit, the purity test presented in \cref{theorem:purity_test} is performed to examine whether its input state is in a pure state. 
If it is, then we can, by theorem \cref{theorem:measurement_on_pure_state} or \cref{theorem:measurement_on_pure_state_without_ctrl}, replace the mid-circuit measurement and its controlling gate with a static circuit snippet consisting of a probabilistic gate and normal quantum gates.

\paragraph{Asymptotic analysis}
For a $n$-qubit dynamic circuit of $g$ gates, 
QCP runs in $\mathcal{O}(g\cdot c^2 \cdot n)$, where $c$ is the maximum number of controls allowed for each gate.
As mentioned below \cref{theorem:purity_test}, each purity test runs in $\mathcal{O}(k^2)$ where $k$ is the number of computational basis states in the input state being tested. 
In QCP, each entanglement group is limited in $n_{max}$ many basis states \cite{chen_QCP_2023}. Suppose there are $m$ many mid-circuit measurements in the circuit, then $\mathcal{O}(m \cdot n_{max}^2)$ many steps are needed for the purity test.
Finally, the computational complexity of the optimization is $\mathcal{O}(g\cdot c^2 \cdot n + m \cdot n_{max}^2)$. Since $n_{max}$ is a constant, the optimization is polynomial in $n$, $g$, $c$ and $m$. 
\begin{algorithm}
\caption{Overall framework of optimization}
\label{alg:overall_framework}
    \KwData{A dynamic circuit $C$} 
    
    \KwResult{An optimized circuit $C_o$ with potentially less number of mid-circuit measurements} 
    
    $\text{S}_{\text{const\_info}} \gets \text{QCP}.\textbf{run}(C)$;
    
    \For {each mid-circuit measurement $M$ in C}{
        // purity-test is explained in \cref{theorem:purity_test}
        
        $purity \gets \text{purity-test}.\textbf{run}(M.\text{input\_state}, \text{S}_{\text{const\_info}})$;
        
        \If{$purity = true$}{
            // Apply \cref{theorem:measurement_on_pure_state} or \cref{theorem:measurement_on_pure_state_without_ctrl}
            
            $\text{optimize}(M)$;
        }
    }
    
\end{algorithm}

\paragraph{Demonstrating examples}

In this paragraph, our proposed optimization is applied to two dynamic circuits to briefly demonstrate its effectiveness.
In the first example depicted in \cref{fig:dem_example_1}, our proposed optimization simplified the dynamic circuit into a static circuit free of any mid-circuit measurement, significantly reducing runtime overhead. Besides, the circuit depth is decreased. 
The next example is depicted in \cref{fig:dem_example_2}. By \cref{theorem:measurement_on_pure_state}, one mid-circuit measurement, together with its controlled gate, is replaced with a rotation gate (a Hadamard gate in this case), a probabilistic $X$ gate and a controlled gate.
Another mid-circuit measurement is replaced with a normal quantum control because based on the information propagated by QCP, its input state is in one of the computational basis states. 
Although the depth of the circuit of \cref{fig:dem_example_2_simplified} increases slightly, the optimization manages to exchange dynamic components in the circuit into static ones, vastly decreasing runtime cost.

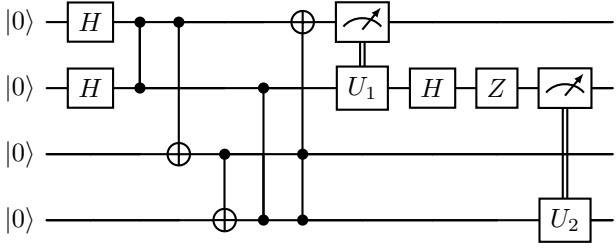
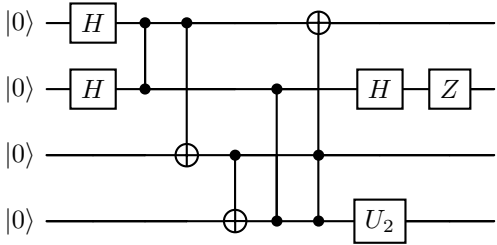
\begin{figure}
    \begin{subfigure}{1.0\linewidth}
        \begin{quantikz}[column sep=8pt, row sep={25pt,between origins}]
  \lstick{$\ket{0}$} & \gate{H} & \ctrl{1} & \ctrl{2} & \qw & \qw & \targ{} & \meter{}\wire[d][1]{c} & \qw &\qw & \qw& \qw \\
  \lstick{$\ket{0}$} & \gate{H} & \ctrl{-1} & \qw & \qw & \ctrl{2} & \qw & \gate{U_1} & \gate{H} & \gate{Z} & \meter{}\wire[d][2]{c} & \qw\\
  \lstick{$\ket{0}$} & \qw & \qw & \targ{} & \ctrl{1} & \qw & \ctrl{-2} & \qw & \qw & \qw & \qw& \qw  \\
  \lstick{$\ket{0}$} & \qw & \qw & \qw & \targ{} & \ctrl{-1} & \ctrl{-1} & \qw & \qw &\qw & \gate{U_2} & \qw\\
\end{quantikz}
        \caption{The dynamic circuit in the first demonstrating example.}
        \label{fig:dem_example_1_original}
    \end{subfigure}
    \\
    \hfill
    \\
    \begin{subfigure}{1.0\linewidth}
        
        \begin{quantikz}[column sep=9pt, row sep={25pt,between origins}]
  \lstick{$\ket{0}$} & \gate{H} & \ctrl{1} & \ctrl{2} & \qw & \qw & \targ{} & \qw &\qw & \qw \\
  \lstick{$\ket{0}$} & \gate{H} & \ctrl{-1} & \qw & \qw & \ctrl{2} & \qw & \gate{H} & \gate{Z} & \qw\\
  \lstick{$\ket{0}$} & \qw & \qw & \targ{} & \ctrl{1} & \qw & \ctrl{-2} & \qw & \qw & \qw  \\
  \lstick{$\ket{0}$} & \qw & \qw & \qw & \targ{} & \ctrl{-1} & \ctrl{-1} & \gate{U_2} &\qw & \qw\\
\end{quantikz}
        \caption{The circuit optimized by our proposed method.}
        \label{fig:dem_example_1_simplified}
    \end{subfigure}
    \caption{The first demonstrating example.}
    \label{fig:dem_example_1}
\end{figure}
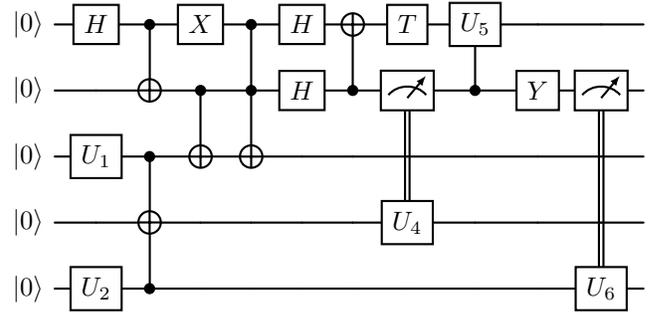
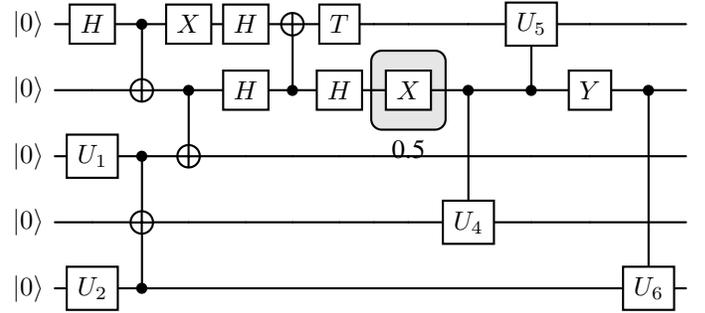
\begin{figure}
    \begin{subfigure}{1.0\linewidth}
    \begin{quantikz}[column sep=6pt, row sep={25pt,between origins}]
        \lstick{$\ket{0}$} & \gate{H}  & \ctrl{1} & \gate{X}& \ctrl{1}& \gate{H}& \targ{}  & \gate{T}              & \gate{U_5} &&&\\
        \lstick{$\ket{0}$} &           & \targ{}  & \ctrl{1}& \ctrl{1}& \gate{H}& \ctrl{-1}& \meter{}\wire[d][2]{c}& \ctrl{-1}  & \gate{Y} &\meter{}\wire[d][3]{c}&\\
        \lstick{$\ket{0}$} & \gate{U_1}& \ctrl{1} & \targ{} & \targ{} &&&&&&&          \\
        \lstick{$\ket{0}$} &           & \targ{}  &         &         &         &          & \gate{U_4} &&&&      \\
        \lstick{$\ket{0}$} & \gate{U_2}& \ctrl{-1}&         &         &         &          &           &&& \gate{U_6}&
    \end{quantikz}
        \caption{The dynamic circuit in the second demonstrating example.}
        \label{fig:dem_example_2_original}
    \end{subfigure}
    \\
    \hfill
    \\
    \begin{subfigure}{1.0\linewidth}
    \begin{quantikz}[column sep=4.5pt, row sep={25pt,between origins}]
        \lstick{$\ket{0}$} & \gate{H}  & \ctrl{1} & \gate{X}& \gate{H}& \targ{}  & \gate{T}      &&&        & \gate{U_5} &&&\\
        \lstick{$\ket{0}$} &           & \targ{}  & \ctrl{1}&  \gate{H}& \ctrl{-1}& \gate{H} && \probgate{X}{0.5} & \ctrl{2} & \ctrl{-1}  & \gate{Y} & \ctrl{3}&\\
        \lstick{$\ket{0}$} & \gate{U_1}& \ctrl{1} & \targ{} & &&&&&&&&&          \\
        \lstick{$\ket{0}$} &           & \targ{}  &         &         &     &&&     & \gate{U_4} &&&&      \\
        \lstick{$\ket{0}$} & \gate{U_2}& \ctrl{-1}&         &         &    &&&      &           &&& \gate{U_6}&
    \end{quantikz}
     \caption{The circuit optimized by our proposed method.}
        \label{fig:dem_example_2_simplified}
    \end{subfigure}
    \caption{The second demonstrating example.}
    \label{fig:dem_example_2}
\end{figure}

\section{Related works}\label{sec:rel_works}
The technology of mid-circuit measurements is being increasingly utilized in quantum computing. 
For example, qubit-reuse, the technique of recycling inactive qubits and using them in the rest of the circuits, could use mid-circuit measurements as one of the approaches to reset inactive qubits \cite{decross_qubit-reuse_2022, brandhofer_optimal_2023, hua_exploiting_2023}.
Mid-circuit measurement also plays a vital role in Quantum Error Correction. In many Quantum Error Correction codes like Surface Code, the mid-circuit measurement is an important part of the protocol \cite{bravyi_quantum_1998, dennis_topological_2002}. 

Quantum compilation is a process that transforms high-level source code into low-level entities executable by various quantum computers sitting at the backend. In recent decades, this field has seen an outpouring of research \cite{elsharkawy_integration_2023}.
In the quantum compilation process, various optimization passes are applied to achieve certain computations with less consumption of resources \cite{Qiskit, Sivarajah_tket_2021}.

As an effort to bring static analysis to the world of quantum computing, 
Chen~\andothers~proposed the Quantum Constant Propagation, a circuit optimization pass that tries to propagate the initial information throughout the circuit and use the propagated information to simplify circuits \cite{chen_QCP_2023}.

The paradigm of probabilistic quantum circuits is similar to probabilistic programming, which encodes probability distribution in programming \cite{pfeffer_practicalPP_2016}.
While the key advantage of probabilistic programming is its flexibility in representing and manipulating uncertain knowledge, the shining point of probabilistic quantum circuits is its ability to model the uncertainty of dynamic circuits at runtime.  

\section{Conclusion}\label{sec:conclusion}
Mid-circuit measurement poses significant challenges for quantum computing hardware. Therefore, this paper introduces a novel optimization approach to mitigate this issue by reducing the number of mid-circuit measurements for dynamic circuits. By introducing the concept of probabilistic circuits and taking advantage of QCP, our proposed method offers a promising avenue for addressing this challenge. 
Through our contributions, this paper advances the optimization of mid-circuit measurements and offers insights into the efficiency and scalability of the dynamic circuit model.

\section*{Acknowledgment}
The research is part of the Munich Quantum Valley (MQV), which is supported by the Bavarian state government with funds from the Hightech Agenda Bayern Plus. We are grateful to Prof. Dr. Helmut Seidl for many fruitful discussions and his support at all times.

\end{document}